\def\sgrad{\mathrm{sgrad}\,}
\def\trace{\mathrm{trace}\,}
\theoremstyle{plain}
\newtheorem{theorem}{Theorem}
\begin{document}

\begin{center}
\begin{Large}
\textbf{BIFURCATION DIAGRAM OF ONE GENERALIZED INTEGRABLE MODEL OF VORTEX DYNAMICS}
\vskip 5mm
\textbf{Pavel\,\,E.~Ryabov$^{1,2,3}$ and Artemiy\,\,A.~Shadrin$^{1}$}
\end{Large}

${}^{1}$\! Financial University under the Government of the Russian Federation\\
Leningradsky prosp. 49, Moscow, 125993 Russia

${}^{2}$\! Institute of Machines Science, Russian Academy of Sciences\\
Maly Kharitonyevsky per. 4, Moscow, 101990 Russia

${}^{3}$\! Udmurt State University\\
ul. Universitetskaya 1, Izhevsk, 426034 Russia%

E-mail: PERyabov@fa.ru, shadrin.art@gmail.com

\end{center}

\begin{abstract}
The article is devoted to the results of a phase topology research on a generalized mathematical model, which covers such two problems as dynamics of two point vortices enclosed in a harmonic trap in a Bose-Einstein condensate and dynamics of two point vortices bounded by a circular region in an ideal fluid. New bifurcation diagrams are obtained and three-into-one and four-into-one tori bifurcations are observed for some values of the model's physical parameters. The presence of such bifurcations in the integrable model of vortex dynamics with positive intensities indicates a complex transition and a connection between bifurcation diagrams in both limiting cases. In this paper, we analytically derive the equations, that define the parametric family of the generalized model's bifurcation diagrams, including bifurcation diagrams of the specified limiting cases. The dynamics of the general case's bifurcation diagram is shown, using its implicit parametrization. The stable bifurcation diagram, related to the problem of dynamics of two vortices bounded by a circular region in an ideal fluid, is observed for particular values of parameters.
\end{abstract}

\noindent\textit{Keywords}:\,{completely integrable Hamiltonian system, bifurcation diagram, bifurcation of\\ Liouville tori, dynamics of point vortices, Bose-Einstein condensate}\\
\textit{MSC2010 numbers}: 76M23, 37J35,  37J05, 34A05 \\
\textit{Received April 19, 2019}

\def\sgrad{\mathrm{sgrad}\,}
\def\trace{\mathrm{trace}\,}

\section{INTRODUCTION}

Integrable models of point vortices on a plane hold a central position in the analytical dynamics of vortex structures. Studies of vortex dynamics in quantum physics have shown that quantum vortices behave similarly to the thin vortex filaments studied in classical fluid dynamics. A special attention is paid to vortex structures in a Bose-Einstein condensate obtained for ultracold atomic gases \cite{fett2009}. In this paper we consider
a generalized mathematical model which covers such two problems as dynamics of two point vortices enclosed in a harmonic trap in a Bose-Einstein condensate \cite{kevrekPhysLett2011}, \cite{kevrek2013}, \cite{kevrikidis2014} and dynamics of two point vortices bounded by a circular region in an ideal fluid \cite{Greenhill1877}, \cite{BorMamSokolovskii2003}, \cite{BorMam2005book}.
This model leads to a completely Liouville integrable Hamiltonian system with two degrees of freedom, therefore topological methods used for such systems can be applied. Topological methods have been successfully used to analyze the stability of absolute and relative choreographies \cite{BorMamSokolovskii2003}, \cite{borkil2000},  \cite{bormamkil2004}, \cite{kilinbormam2013}, \cite{BorSokRyab2016}. In integrable models, these motions are usually associated with first integrals' constant values for which these integrals, that are considered as functions of phase variables, are dependent in the sense of a linear dependence of differentials. The main role in the study of such dependence is played by a bifurcation diagram of momentum mapping.

The generalized mathematical model is described by a Hamiltonian system of differential equations:
\begin{equation}
\label{eq1_1}
\displaystyle{\Gamma_k\dot x_k=\frac{\partial H}{\partial y_k} (z_1,z_2);\quad \Gamma_k\dot y_k=-\frac{\partial H}{\partial x_k} (z_1,z_2),\quad k=1,2,}
\end{equation}
where the Hamiltonian $H$ has the form
\begin{equation}
\label{eq1_2}
\begin{array}{l}
\displaystyle{H=\frac{1}{2}\Bigl[\Gamma_1^2\ln(1-|z_1|^2)+\Gamma_2^2\ln(1-|z_2|^2)+\Gamma_1\Gamma_2\ln\left(\frac{[|z_1-z_2|^2+(1-|z_1|^2)(1-|z_2|^2)]^\varepsilon}{|z_1-z_2|^{2(c+\varepsilon)}}\right)\Bigr].}
\end{array}
\end{equation}
Here, the Cartesian coordinates of $k$-th vortex ($k=1,2$) with intensities $\Gamma_k$ are denoted by\linebreak
$z_k=x_k+{\rm i}y_k$. Physical parameter ``$c$'' expresses the extent of the vortices' interaction,
$\varepsilon$ is a parameter of deformation. These parameters determine two limiting cases, namely, the model of two point vortices enclosed in a harmonic trap in a Bose-Einstein condensate ($\varepsilon=0$) \cite{kevrekPhysLett2011}, \cite{kevrek2013}, \cite{kevrikidis2014} and the model of two point vortices bounded by a circular region in an ideal fluid ($c=0$, $\varepsilon=1$)  \cite{Greenhill1877}, \cite{BorMamSokolovskii2003}, \cite{BorMam2005book}.
The phase space $\cal P$ is defined as a direct product of two open disks of radius $1$ with the exception of vortices' collision points
\begin{equation*}
{\cal P}=\{(z_1,z_2)\,:\, |z_1|<1,\,|z_2|<1,z_1\ne z_2\}.
\end{equation*}
 The Poisson structure on the phase space $\cal P$ is given in the standard form
\begin{equation}
\label{eq1_3}
\{z_k,\bar{z}_j\}=-\frac{2\rm i}{\Gamma_k}\delta_{kj},
\end{equation}
 where $\delta_{kj}$ is the Kronecker delta.

The system $\eqref{eq1_1}$ admits an additional first integral of motion, \textit{the angular momentum of vorticity},
\begin{equation}
\label{eq1_4}
F=\Gamma_1|z_1|^2+\Gamma_2|z_2|^2.
\end{equation}

The function $F$ together with the Hamiltonian $H$ forms on $\cal P$ a complete involutive set of integrals of system $\eqref{eq1_1}$. According to the Liouville-Arnold theorem, a regular level surface of the first integrals is a nonconnected union of two-dimensional tori filled with conditionally periodic trajectories. The \textit{momentum mapping} ${\cal F}\,:\, {\cal P}\to {\mathbb R}^2$ is defined by setting  ${\cal F}(\boldsymbol x)=(F(\boldsymbol x), H(\boldsymbol x))$. Let $\cal C$ denote the set of all critical points of the momentum mapping, i.e., points at which $\mathop{\rm rank}\nolimits d{\cal F}(\boldsymbol x) < 2$. The set of critical values $\Sigma = {\cal F}({\cal C}\cap{\cal P})$ is called the \textit{bifurcation diagram}.

In works \cite{SokRyabRCD2017} and \cite{sokryab2018} the bifurcation diagram was analytically investigated at $c=1$ and $\varepsilon=0$. In \cite{RyabDan2019} and \cite{RyabSocND2019} a reduction to a system with one degree of freedom was performed and a bifurcation of three tori into one was found at $c>3$ and $\varepsilon=0$.
This bifurcation was observed earlier by Kharlamov \cite{Kharlamov1988} while studying a phase topology of the Goryachev-Chaplygin-Sretensky integrable case in rigid body dynamics. In Fomenko, Bolsinov, and Matveev's work \cite{bolsmatvfom1990} it was found as a singularity in a 2-atom form of a Liouville foliation's singular layer. In Oshemkov and Tuzhilin's work \cite{oshtuzh2018}, devoted to the splitting of saddle singularities, such a bifurcation was found to be unstable and its perturbed foliations were presented. In the situation where the physical parameter of vortices' intensity ratio is experiencing integrable perturbation, said bifurcation comes down to the bifurcation of two tori into one and vice versa \cite{RyabDan2019}. In another limiting case ($c=0, \varepsilon=1$), the bifurcation analysis of dynamics of two point vortices bounded by a circular domain in an ideal fluid is performed \cite{BorMamSokolovskii2003}, \cite{BorMam2005book}. In these limiting cases completely different bifurcation diagrams were obtained. In the case of a positive vortex pair a new bifurcation diagram is obtained for which the bifurcation of four tori into one is observed \cite{RyabovArXiv2019}. The presence of three-into-one and four-into-one tori bifurcations in the integrable model of vortex dynamics with positive intensities indicates a complex transition and connection between two bifurcation diagrams in both limiting cases.
A.\,V.~Borisov suggested to study both these integrable models and to find out how the bifurcation diagrams of both limiting cases are related. In this paper we analytically derive the equations that define a parametric family of bifurcation diagrams of the generalized model \eqref{eq1_1} containing bifurcation diagrams of the specified limiting cases. In the general case reduction to a system with one degree of freedom allows us to apply level curves of corresponding Hamiltonian in order to observe different kinds of Liouville tori bifurcations.

\section{CRITICAL SET}
\subsection{General case}
We define the polynomial expressions $F_1$ and $F_2$ from phase variables:
\begin{eqnarray}
&\label{eq2_1}
F_1=x_1y_2-y_1x_2,\\[3mm]
&
\begin{array}{l}
\label{eq2_2}
F_2= cx_2(\Gamma_1x_1+\Gamma_2x_2)(x_2^2+y_2^2-1)[x_2(x_1x_2-1)+x_1y_2^2][(x_1^2-1)x_2^2+x_1^2y_2^2]+\\[3mm]
+\Gamma_2[(x_1^2-1)x_2^2+x_1^2y_2^2]\Bigl\{\varepsilon x_2^3(x_2^2+y_2^2-1)^2+x_1(x_1-x_2)(x_2^2+y_2^2)[x_2(x_1x_2-1)+x_1y_2^2]\Bigr\}\\[3mm]
+\Gamma_1x_1(x_2^2+y_2^2-1)\Bigl\{x_2^2(x_2-x_1)(x_2^2+y_2^2)(x_2(x_1x_2-1)+x_1y_2^2)+\varepsilon[(x_1^2-1)x_2^2+x_1^2y_2^2]^2\Bigr\},
\end{array}
\end{eqnarray}
and denote by ${\cal N}$ the closure of system's set of solutions:
\begin{equation}
\label{eq2_3}
F_1 = 0,\quad F_2 = 0.
\end{equation}

Then the theorem below is true.
\begin{theorem}
The set of critical points $\cal C$ of the moment mapping $\cal F$ coincides with the \eqref{eq2_3} system's set of solutions. The set ${\cal N}$ is a two-dimensional invariant submanifold of the system \eqref{eq1_1} with the Hamiltonian \eqref{eq1_2}.
\end{theorem}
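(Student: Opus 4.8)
The plan is to exploit the fact that both first integrals descend through the map of rotation invariants
\[
\Phi:\ (z_1,z_2)\longmapsto (u,v,w)=\bigl(|z_1|^2,\,|z_2|^2,\,|z_1-z_2|^2\bigr),
\]
since $F=\Gamma_1 u+\Gamma_2 v$ and, reading off \eqref{eq1_2}, $H=H(u,v,w)$, the interaction term entering only through $w$ and through $D:=w+(1-u)(1-v)$. Writing $P:=\Gamma_1 H_v-\Gamma_2 H_u$ and $Q:=H_w$, a direct computation gives
\[
dF\wedge dH=P\,du\wedge dv+\Gamma_1 Q\,du\wedge dw+\Gamma_2 Q\,dv\wedge dw .
\]
A point lies in $\cal C$ exactly when this $2$-form vanishes, so the whole analysis reduces to understanding the rank of $d\Phi$ (the linear (in)dependence of $du,dv,dw$) together with the two scalar factors $P$ and $Q$.

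First I would record the elementary identity $F_1^2=uv-\tfrac14(u+v-w)^2$, obtained from $F_1=\operatorname{Im}(\bar z_1z_2)$ and $\operatorname{Re}(\bar z_1 z_2)=\tfrac12(u+v-w)$. This shows that $\{F_1=0\}$ is precisely the collinearity locus where the triangle inequality for $z_1,z_2,0$ degenerates, and that $du,dv,dw$ are linearly independent off $\{F_1=0\}$ and dependent on it (the cases $z_1=0$, $z_2=0$ being absorbed into $F_1=0$). Hence on the open set $\{F_1\neq0\}$ the three $2$-forms $du\wedge dv,\,du\wedge dw,\,dv\wedge dw$ are independent, so $dF\wedge dH=0$ forces $P=0$ and $Q=0$ simultaneously. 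The key point is that $Q=H_w=\tfrac12\Gamma_1\Gamma_2\bigl(\varepsilon/D-(c+\varepsilon)/w\bigr)$ cannot vanish in the interior: $Q=0$ is equivalent to $c\,w+(c+\varepsilon)(1-u)(1-v)=0$, which has no solution with $0<u,v<1$ and $w>0$ for the (nonnegative) parameter values $c\ge0$, $\varepsilon\ge0$, $c+\varepsilon>0$ of the model. Consequently there are no critical points with $F_1\neq0$, and $\mathcal C\subseteq\{F_1=0\}$.

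It remains to compute the critical condition on the three-dimensional set $\{F_1=0\}$. There, for $z_1,z_2\neq0$, the collinearity $z_2=\lambda z_1$ ($\lambda\in\mathbb R$) yields the single relation $dw=(1-\lambda)\,du-\tfrac{1-\lambda}{\lambda}\,dv$ among the differentials, so that $dF\wedge dH$ collapses to a single scalar multiple of $du\wedge dv$. Setting that coefficient to zero gives one scalar equation $\lambda P-(1-\lambda)(\Gamma_1+\lambda\Gamma_2)Q=0$; substituting the explicit $P,Q$, using the rotational freedom to normalize $z_1$ real, and clearing the denominators $(1-u),(1-v),D,w$ turns this into a polynomial relation, which I expect to reproduce exactly \eqref{eq2_2}, i.e. $F_2=0$. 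Together with $F_1=0$ this yields $\mathcal C=\{F_1=0,\ F_2=0\}$; the degenerate strata $z_1=0$, $z_2=0$ and the puncture $z_1=z_2$ are recovered on passing to the closure $\mathcal N$.

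Finally, for invariance and dimension of $\mathcal N$: differentiating $\mathcal F\circ\phi_t=\mathcal F$ (valid because $\{F,H\}=0$, so the flow $\phi_t$ of $\sgrad H$ preserves both integrals) gives $\rank d\mathcal F_{\phi_t(x)}=\rank d\mathcal F_x$, whence $\phi_t$ maps $\mathcal C$ to itself; thus $\mathcal C$, and with it $\mathcal N=\overline{\mathcal C}$, is invariant under \eqref{eq1_1}. Being cut out on $\mathcal P$ by the two conditions $F_1=0$, $F_2=0$, the set $\mathcal N$ has codimension two, i.e. dimension two, and is a smooth submanifold away from its singular strata. I expect the main obstacle to be purely computational: verifying that the cleared-denominator form of the reduced scalar equation coincides term-by-term with the polynomial $F_2$ of \eqref{eq2_2}, and confirming the non-vanishing of $Q$ over the whole admissible parameter range. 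The conceptual reduction through $\Phi$ and the compact expression for $dF\wedge dH$ are what make the statement tractable.
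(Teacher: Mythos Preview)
Your argument is correct and takes a genuinely different route from the paper. The paper proceeds by brute force: it asserts that direct evaluation of the $2\times 2$ minors of the Jacobi matrix of $\mathcal F$ gives exactly $\{F_1=0,\,F_2=0\}$, and it establishes invariance by computing $\{F_1,H\}$ and $\{F_2,H\}$ on $\{F_1=0\}$ and exhibiting each explicitly as a multiple of $F_2$ (with concrete factors $\sigma_1,\sigma_2$). You instead factor both integrals through the $SO(2)$-orbit map $\Phi=(u,v,w)$ and reduce the rank condition to the dichotomy ``either $d\Phi$ drops rank (which is exactly $F_1=0$), or $P=Q=0$''; the key non-vanishing of $Q=H_w$ then forces $\mathcal C\subset\{F_1=0\}$ without any minors being written down, and the single remaining scalar condition on the collinear locus produces $F_2$. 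Your invariance argument---preservation of the rank stratification of $d\mathcal F$ under the Hamiltonian flow of $H$---is the standard soft one and bypasses the explicit bracket computation. What each approach buys: the paper's computation is parameter-agnostic and, as a by-product, yields the explicit multipliers $\sigma_1,\sigma_2$ relating $\dot F_1,\dot F_2$ to $F_2$; your argument explains \emph{why} the collinearity invariant $F_1$ is forced to appear and why the critical set has codimension two, at the cost of relying on $c\ge 0$, $\varepsilon\ge 0$, $c+\varepsilon>0$ for the non-vanishing of $Q$, and of leaving the term-by-term identification of the cleared-denominator scalar equation with the polynomial $F_2$ of \eqref{eq2_2} as the (acknowledged) routine symbolic check.
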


\begin{proof}
To prove the first statement of the theorem it is necessary to find the
phase space points where the rank of the moment map is not maximal. With the help of direct computations one can verify that the Jacobi matrix of the moment map has
zero minors of the second order at the points ${\boldsymbol z}\in{\cal P}$, the coordinates of which satisfy the equations of system \eqref{eq2_3}.
Therefore ${\cal C} = {\cal N}$. The fact that the relations \eqref{eq2_3} are invariate might be prooved by the following chain of correct equalities:
\begin{equation*}
\dot F_1 = \{F_1,H\}_{F_1=0} = \sigma_1F_2,\quad \dot F_2 = \{F_2,H\}_{F_1=0} = -\frac{x_2y_2}{x_2^2+y_2^2}\sigma_1\sigma_2F_2,
\end{equation*}
where polynomial functions $\sigma_k$ from phase variables have the following explicit form:
\begin{equation*}
\begin{array}{l}
\displaystyle{\sigma_1=\frac{1}{(x_1 - x_2)x_2(x_2^2 + y_2^2-1)[x_2(x_1x_2-1) + x_1y_2^2][(x_1^2-1)x_2^2 + x_1^2y_2^2]},}\\[3mm]
\sigma_2=c(x_2^2+y_2^2-1)\Bigl\{\Gamma_1\Bigl[x_2^3\Bigl(4+x_1\bigl((x_1^2-3)x_2-2x_1\bigr)\Bigr)+x_1x_2\bigl((2x_1^2-3)x_2-2x_1\bigr)y_2^2+x_1^3y_2^4\Bigr]-\\[3mm]
-\Gamma_2x_2(x_2^2+y_2^2)(-2x_1x_2-x_2^2+3x_1^2(x_2^2+y_2^2))\Bigr\}-\Gamma_2(x_2^2+y_2^2)\Bigl\{4x_2^3+x_1^3(x_2^2+y_2^2)(1+x_2^2+y_2^2)+\\[3mm]
+x_1x_2^2(2\varepsilon(x_2^2+y_2^2-1)^2-3(1+x_2^2+y_2^2))\Bigr\}-\Gamma_1x_2(x_2^2+y_2^2-1)\Bigl\{4\varepsilon[(x_1^2-1)x_2^2+x_1^2y_2^2]+\\[3mm]
+(x_2^2+y_2^2)[2x_1x_2(1+x_2^2+y_2^2)-x_2^2-3x_1^2(x_2^2+y_2^2)]\Bigr\}.
\end{array}
\end{equation*}
\end{proof}

\subsection{Some special cases}

Let
\begin{equation}
\label{eq2_40}
\Gamma_1=1, \Gamma_2=a, \varepsilon=0, c\ne 0.
\end{equation}
Substitution of \eqref{eq2_40} parameters' values in \eqref{eq2_2} leads to the expression
\begin{equation}
\label{eq3_4}
\begin{array}{l}
F_2=a(x_2^2+y_2^2)^2x_1^4-\\[3mm]
-(x_2^2+y_2^2)[(x_2^2+y_2^2)(a-c)+c]x_2x_1^3+(x_2^2+y_2^2)[(ac-1)(x_2^2+y_2^2-1)-a]x_1^2x_2^2+\\[3mm]
[(x_2^2+y_2^2)(x_2^2+y_2^2+a-c-1)+b]x_2^3x_1-ac(x_2^2+y_2^2-1)x_2^4.
\end{array}
\end{equation}

In the case of a positive vortex pair
\begin{equation}
\label{eq2_60}
\Gamma_1=1, \Gamma_2=1,\varepsilon\ne 0, c\ne 0,
\end{equation}
the equation \eqref{eq2_2} takes the form
\begin{equation*}
F_2=(x_1+x_2)F_3,
\end{equation*}
where
\begin{equation*}
\begin{array}{l}
F_3=[x_1(x_2^2+y_2^2)-x_2]\Bigl\{(x_1^2+x_2^2)(x_2^2+y_2^2)[x_1(x_2^2+y_2^2)-cx_2]+x_2[(c-2)(x_2^2+y_2^2)^2x_1^2+cx_2^2]\Bigr\}\\[3mm]
+\varepsilon(x_2^2+y_2^2-1)[x_1^2(x_2^2+y_2^2)-x_2^2][(x_2^2+y_2^2)(x_1^2-x_1x_2+x_2^2)-x_2^2].
\end{array}
\end{equation*}
The equation $F_2=0$ breaks down into two subsystems, taking into account \eqref{eq2_1}:
\begin{equation*}
x_1+x_2=0,\quad y_1+y_2=0
\end{equation*}
and
\begin{equation*}
F_1 = 0,\quad F_3 = 0,
\end{equation*}
each of which is a two-dimensional invariant submanifold of the system \eqref{eq1_1} with the Hamiltonian \eqref{eq1_2} and specified as in \eqref{eq2_60} parameters' values.

It should be noted that the expression \eqref{eq2_2} for another limiting case
\begin{equation*}
c=0,\quad \varepsilon=1,
\end{equation*}
also disintegrates in the case of a positive vortex pair ($\Gamma_1=\Gamma_2=1$):
\begin{equation*}
F_2=(x_1+x_2)F_3,
\end{equation*}
where
\begin{equation*}
\begin{array}{l}
F_3=-x_2^4(x_2^2+y_2^2-1)^2+\\[3mm]
+x_1(x_2^2+y_2^2)\Bigl\{-x_2^3-x_1^3(x_2^2+y_2^2)-x_1^2(x_2^2+y_2^2)^2(3x_2-2x_1)+x_1x_2^2[2-y_2^2-x_2^2+2(x_2^2+y_2^2)^2]\Bigr\}.
\end{array}
\end{equation*}

\section{BIFURCATION DIAGRAM}
\subsection{General case}
To determine the bifurcation diagram $\Sigma $ as the image of the critical points' set $\cal C $
of the momentum mapping $\cal F $, it is convenient to change to polar coordinates
\begin{equation}
\label{eq3_1}
x_1 = r_1\cos\theta_1,\quad y_1 = r_1\sin\theta_1,\quad
x_2 = r_2\cos\theta_2,\quad y_2 = r_2\sin\theta_2.
\end{equation}
Substitution of \eqref{eq3_1} into the first equation of the system \eqref{eq2_3} results in an equation $\sin(\theta_1-\theta_2)=0$, i.e. $\theta_1-\theta_2=0$ or $\theta_1-\theta_2=\pi$. Next, we restrict vortex intensities to positive values, i.e. we assume that $\Gamma_1>0$ and $\Gamma_2>0$.
In contrast to the case of the intensities with opposite signs \cite{SokRyabRCD2017}, in this particular situation the equation $\theta_1-\theta_2=0$ is impossible no matter which values parameters $c$ and $\varepsilon$ take.
In case of $\theta_1=\theta_2+\pi$, the second equation of the system \eqref{eq2_3} is reduced to
\begin{equation}
\label{eq3_2}
W(r_1,r_2)=0,
\end{equation}
where
\begin{equation*}
\begin{array}{l}
W(r_1,r_2)=(1-r_1^2)(1-r_2^2)\Bigl\{[c(1+r_1r_2)+\varepsilon](\Gamma_1r_1-\Gamma_2r_2)-\varepsilon(\Gamma_1r_1^3-\Gamma_2r_2^3)\Bigr\}-\\[3mm]
-r_1r_2(r_1+r_2)(1+r_1r_2)[\Gamma_1(1-r_2^2)-\Gamma_2(1-r_1^2)].
\end{array}
\end{equation*}

Substituting \eqref{eq3_1} into the Hamiltonian \eqref{eq1_2} and the vorticity moment \eqref{eq1_4} in the case where $\theta_1=\theta_2+\pi$, leads to the following values of the first integrals:
\begin{equation}
\label{eq3_3}
\begin{array}{l}
\displaystyle{h=\frac{1}{2}\{\Gamma_1^2\ln(1-r_1^2)+\Gamma_2^2\ln(1-r_2^2)\}+\Gamma_1\Gamma_2\ln\Bigl[\frac{(1+r_1r_2)^\varepsilon}{(r_1+r_2)^{c+\epsilon}}\Bigr],}\\[3mm]
f=\Gamma_1r_1^2+\Gamma_2r_2^2.
\end{array}
\end{equation}
This system \eqref{eq3_3} together with the equation \eqref{eq3_2} defines an implicit bifurcation diagram on the plane ${\mathbb R}^2(f,h)$.

\subsection{Special cases of bifurcation diagram parametrization}
In some special cases, it was possible to find an explicit parametrization of the bifurcation diagram.

Let
\begin{equation*}
\varepsilon=0.
\end{equation*}
After reduction by a non-zero factor of $1+r_1r_2$, the equation \eqref{eq3_2} takes the form
\begin{equation}
\label{eq4_6}
c(1-r_1^2)(1-r_2^2)(\Gamma_1r_1-\Gamma_2r_2)-r_1r_2(r_1+r_2)[\Gamma_1(1-r_2^2)-\Gamma_2(1-r_1^2)]=0.
\end{equation}
The algebraic curve \eqref{eq4_6} might be parametrized in the form of
\begin{equation*}
\label{x2_3}
\begin{array}{l}
\displaystyle{r_1=\frac{1}{\sqrt{2}}\sqrt{\frac{c(\Gamma_1-\Gamma_2t)(1+t^2)+(t+1)[(\Gamma_1-\Gamma_2)t\pm\sqrt{\cal D}]}
{t[\Gamma_1 t (c + t + t^2) - \Gamma_2 (1 + t + c t^2)]}}},\\[5mm]
r_2=t\cdot r_1, \text{\quad where\quad}{\cal D}=[c(1-t)(\Gamma_1-\Gamma_2t)+(\Gamma_1+\Gamma_2)t]^2-4\Gamma_1\Gamma_2t^2.
\end{array}
\end{equation*}
The corresponding bifurcation diagram $\Sigma$ is given as a curve on the plane ${\mathbb R}^2(f,h)$:
\begin{equation}
\label{x2_4}
\Sigma:\left\{
\begin{array}{l}
f=(\Gamma_1+\Gamma_2t^2)r_1^2,\\[3mm]
\displaystyle{h=\frac{1}{2}\Bigl\{\Gamma_1^2\ln(1-r_1^2)+\Gamma_2^2\ln(1-t^2r_1^2)-c\Gamma_1\Gamma_2\ln[(1+t)^2r_1^2]\Bigr\},}\\[3mm]
\displaystyle{r_1^2=\frac{c(\Gamma_1-\Gamma_2t)(1+t^2)+(t+1)[(\Gamma_1-\Gamma_2)t\pm\sqrt{\cal D}]}
{2t[\Gamma_1 t (c + t + t^2) - \Gamma_2 (1 + t + c t^2)]}.}

\end{array}\right.
\end{equation}

In the case of a vortex pair of positive intensities, i.e. $\Gamma_1=\Gamma_2=1$, after substitution
\eqref{eq3_1} and $\theta_1=\theta_2+\pi$ in \eqref{eq3_4} the critical set $\cal C$ also takes the simple form
\begin{equation}
\label{eq4_7}
\left\{\begin{array}{l}
\theta_1=\theta_2+\pi;\\
\left[\begin{array}{l}
r_1=r_2;\\
c(1 - r_1^2)(1 - r_2^2)-r_1r_2(r_1 + r_2)^2=0.
\end{array}\right.
\end{array}\right.
\end{equation}
The last equation of the system \eqref {eq4_7} coincides with the equation in the paper \cite {kevrek2013} on P.~225301-2 derived entirely from other considerations. Thus, our conclusion explains that the equation \cite{kevrek2013} on p.~225301-2 defines the radii of critical vortex motions.

In this case the corresponding bifurcation diagram $\Sigma$ consists of two curves $\gamma_1$ and $\gamma_2$, where
\begin{equation}\label{x2_5}
\begin{array}{l}
\displaystyle{\gamma_1: h=\ln\Bigl(1-\dfrac{f}{2}\Bigr)-\frac{c}{2}\ln(2f),\quad 0<f<2;}\\[3mm]
\gamma_2: \left\{
\begin{array}{l}
\displaystyle{h=\frac{1}{2}\ln\left[\frac{s^2(s-1)}{c+s-1}\right]-\frac{1}{2}c\ln\left[\frac{cs^2}{c+s-1}\right],}\\[3mm]
\displaystyle{f=\frac{cs^2-2(s-1)(c+s-1)}{c+s-1},}
\end{array}\right.\qquad s\in \left(1;\frac{2(1+\sqrt{c})}{2+\sqrt{c}}\right].
\end{array}
\end{equation}
For the values of the phy\-si\-cal pa\-ra\-me\-ter $c>3 $, the curve $\gamma_2$ has a cusp at  $s=\tfrac{\bigl[2-c+\sqrt{c(c-2)}\bigr](c-1)}{c-2}$, which coincides with the point of tangency, when $c=3$ and $s=\tfrac{2(1+\sqrt{c})}{2+\sqrt{c}}$.

The parameterized curve \eqref{x2_4} also have the cusps points that satisfy the equation
\begin{equation}
\label{eq4_8}
a_3c^3+a_2c^2+a_1c+a_0=0,
\end{equation}
where
\begin{equation*}
\begin{array}{l}
a_3=4\Gamma_1\Gamma_2(1-t)^2t^2(\Gamma_1-\Gamma_2t)^4,\\[3mm]
a_2=2(1-t)t(\Gamma_1-\Gamma_2t)^2\bigl[2(\Gamma_1+\Gamma_2)^3t^2(\Gamma_1-\Gamma_2t)-(\Gamma_1^2+\Gamma_2^2)(1+t)(\Gamma_1^2-\Gamma_2^2t^4)\bigr];\\[3mm]
a_1=-4\Gamma_2^6t^6(t^2+t-2)+4\Gamma_1^6t^2(2t^2-t-1)+\Gamma_1^5\Gamma_2(1+2t-3t^2+24t^4-16t^5)-\\[3mm]
-4\Gamma_1^2\Gamma_2^4t^3(-1-3t+10t^2-3t^3+t^5)+\Gamma_1\Gamma_2^5t^5(-16+24t-3t^3+2t^4+t^5)+\\[3mm]
+4\Gamma_1^4\Gamma_2^2t^2(-1+3t^2-10t^3+3t^4+t^5)-2\Gamma_1^3\Gamma_2^3t^2(2-11t^2+10t^3-11t^4+2t^6);\\[3mm]
a_0=-2(\Gamma_1-\Gamma_2)^2(\Gamma_1^2+\Gamma_2^2)t^3[-2\Gamma_1\Gamma_2t^2+\Gamma_2^2t^3(2+t)+\Gamma_1^2(1+2t)].
\end{array}
\end{equation*}

Moreover, the discriminant of the left side polynomial in the equation \eqref{eq4_8} describes a situation where the cusps points ``merge'' into one and one of the branches becomes smooth, which leads to a bifurcation diagram describing the interaction of two point vortices in an ideal fluid inside a circular cylinder \cite{BorMamSokolovskii2003}.

As an addition, we investigate the stability features of critical circles whose radii satisfy \eqref{eq4_7} and which lie
in the preimage of the bifurcation curves \eqref{x2_4} and \eqref{x2_5}. In this case, it is sufficient to determine the type
(elliptic/hyperbolic) in any one of the points $(f, h)$ on a smooth branch of the curve $\Sigma$ \cite{BolBorMam1}.

The type of a critical point $x_0$ with rank $1$ in an integrable system with two degrees of freedom can be calculated the following
way. One should specify the first integral $F$, such that $dF(x_0)=0$ and
$ dF \ne 0$ in a neighborhood of this point. The point $x_0$ is a
fixed point for the Hamiltonian vector field $\sgrad F$ and it is possible to calculate the linearization of this field at a given point -- the operator $A_F$ at the point $x_0$. This operator will have two zero eigenvalues and the remaining factor of the characteristic polynomial is $\mu^2-C_F$, where $C_F=\frac {1}{2}\trace(A_F ^ 2)$. When $C_F<0 $ we get the point of a type ``center'' (the corresponding periodic solution is elliptic, it is a stable periodic solution in phase space, the limit of the concentric family of two-dimensional regular tori), and for $C_F>0$ we get the point of a type ``saddle'' (the corresponding periodic solution is hyperbolic and there are motions, asymptotic to this solution, lying on two-dimensional separatrix surfaces). Here, explicit expressions for $C_F$ are presented only for bifurcation curves $\gamma_1$ and $\gamma_2$:

\begin{equation*}\label{x2_6}
\begin{array}{l}
\gamma_1: C_F=(4-c)f^2+4cf-4c,\quad 0<f<2;\\
\gamma_2: C_F=(c-2)s^2+2(c-1)(c-2)s-2(c-1)^2, \quad s\in \left(1;\frac{2(1+\sqrt{c})}{2+\sqrt{c}}\right].
\end{array}
\end{equation*}

Fig.~1 a), b) show an enlarged fragment of the bifurcation diagram in the case of the identical
intensities and $a=1$, while the parameter $c>3$ and the deformation parameter $\varepsilon=0$. The signs $``+``$ and $``-``$ correspond to elliptic (stable) and hyperbolic (unstable) periodic solutions in the phase space. As expected, the type change occurs at the cusp $A$ and the point of tangency $B$, both depicted on the bifurcation diagram $\Sigma$.

\begin{figure}[!ht]
\centering
\includegraphics[width=1\textwidth]{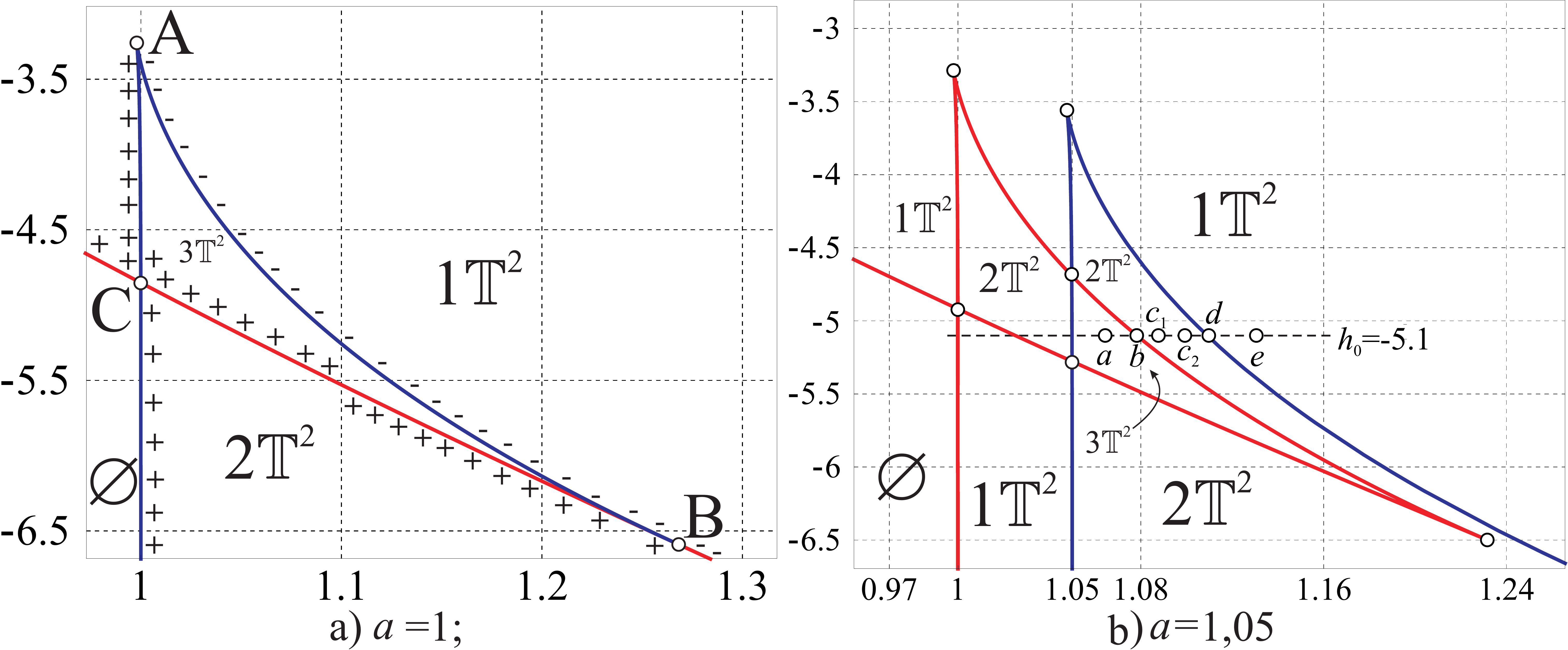}
\caption{a) Enlarged fragment of the bifurcation diagram $\Sigma$ where $\Gamma_1=\Gamma_2=1$ and $c>3,\varepsilon=0$; b) $\Sigma$-perturbation
where $\Gamma_1=1,\Gamma_2=1.05$ and $c>3, \varepsilon=0$.}
\label{fig1}
\end{figure}

Let
\begin{equation}
\label{eq4_11}
\Gamma_1=\Gamma_2=1, \varepsilon \ne 0.
\end{equation}
After substituting \eqref{eq3_1}, \eqref{eq4_11} and $\theta_1=\theta_2+\pi$ into \eqref{eq2_2}, the critical set $\cal C$ also takes a simple form
\begin{equation*}
\left\{\begin{array}{l}
\theta_1=\theta_2+\pi;\\
\left[\begin{array}{l}
r_1=r_2;\\
(1+r_1r_2)[(r_1^2+r_2^2)(r_1r_2+c)-(c-2)r_1^2r_2^2-c]+\\[3mm]
+\varepsilon(1-r_1^2)(1-r_2^2)(r_1^2+r_1r_2+r_2^2-1)=0.
\end{array}\right.
\end{array}\right.
\end{equation*}

The corresponding bifurcation diagram $\Sigma$ is defined on the plane
${\mathbb R}^2(f,h)$ and consists of two curves $\gamma_1$ and $\gamma_2$, where
\begin{equation*}
\begin{array}{l}
\displaystyle{\gamma_1: h=\ln\Bigl(1-\dfrac{f}{2}\Bigr)-\frac{1}{2}(c+\varepsilon)\ln(2f)+\varepsilon\ln\left(1+\frac{f}{2}\right),\quad 0<f<2;}\\[3mm]
\gamma_2: \left\{
\begin{array}{l}
\displaystyle{h=\ln\Bigl(x^\varepsilon\sqrt{x^2-1}z^{1-c}\Bigr),}\\[3mm]
\displaystyle{f=z^2-2xz+2,} \\[3mm]
\displaystyle{z=\frac{x[(\varepsilon+c)(x^2-1)+1]}{(\varepsilon+1)x^2-\varepsilon}},
\end{array}\right. x\in (1; x_0].
\end{array}
\end{equation*}
Here $ x_0$ denotes the root of the equation
\begin{equation}\label{y5}
  (z-2x)^2=4(x^2-1),\quad  x>1.
\end{equation}

Fig.~2 and 3 show the bifurcation diagram and its enlarged fragment in the case of \eqref{eq4_11} for the parameter values $\varepsilon =28, c=12$. Note that the curve $\gamma_2$ has the cusps points $A, B$, and $C$ is the point of tangency between $\gamma_2$ and $\gamma_1$ for the specified parameter values, where
\begin{equation*}
\begin{array}{l}
\displaystyle{x_0=\frac{\sqrt{570}}{1140}\sqrt{2312+\sqrt[3]{2885048-294690\sqrt{6}}+\sqrt[3]{2885048+294690\sqrt{6}}}\approx1,06678;}\\[3mm]
\displaystyle{f_C=\frac{2}{57}\Bigl[45+\frac{\sqrt[3]{141^2}}{\sqrt[3]{45+19\sqrt{6}}}-\sqrt[3]{141(45+19\sqrt{6})}\Bigr]\approx 0,9667958154;}\\[3mm]
h_C\approx -2,8066772742.
\end{array}
\end{equation*}
Shown on Fig.~3~a), the tangent point $C$ satisfies \eqref{y5}.

\begin{figure}[!ht]
\centering
\includegraphics[width=0.5\textwidth]{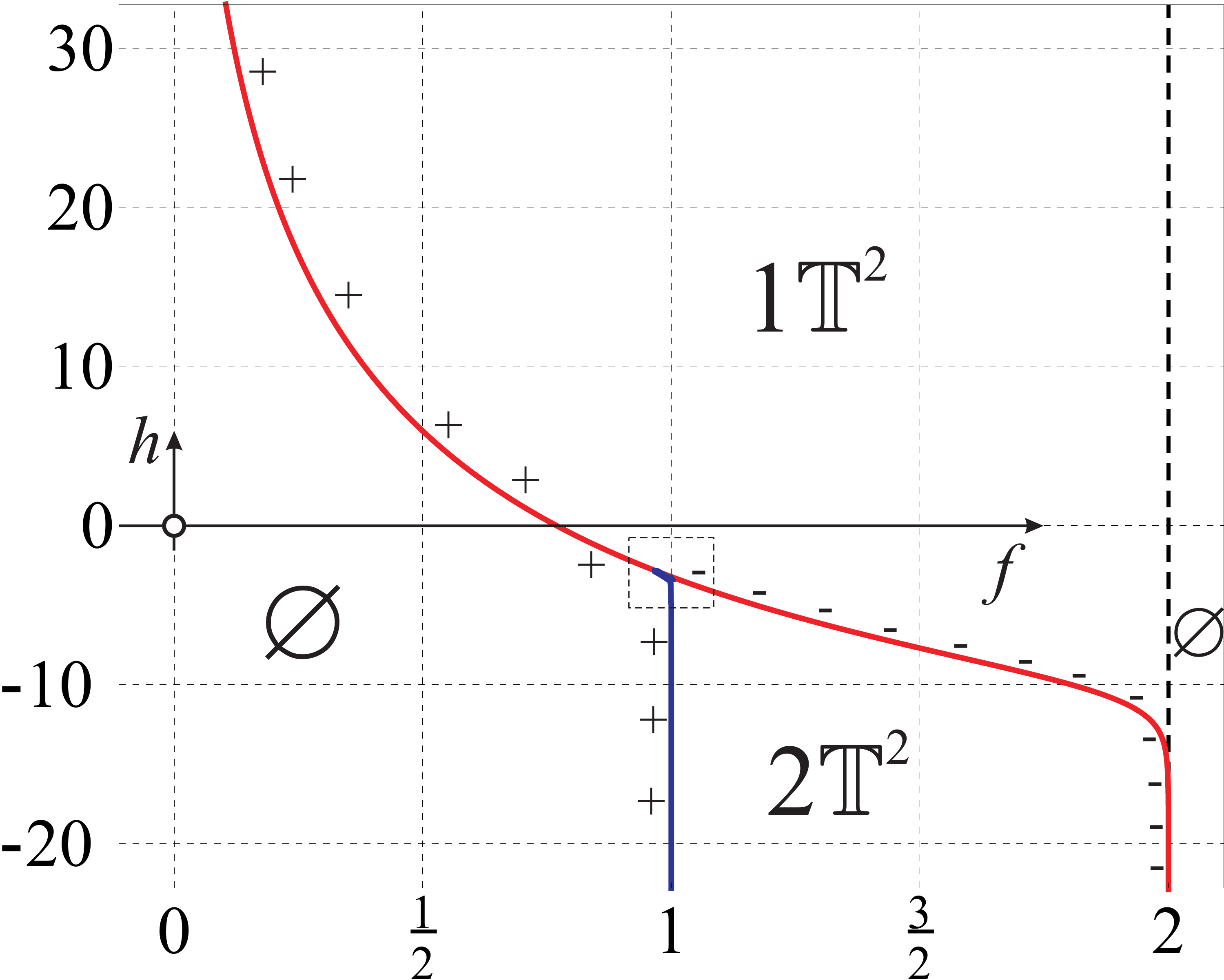}
\caption{Bifurcation diagram $\Sigma$ with $\Gamma_1=\Gamma_2=1$ and $c=12, \varepsilon=28$.}
\label{fig2}
\end{figure}
\begin{figure}[!ht]
\centering
\includegraphics[width=1\textwidth]{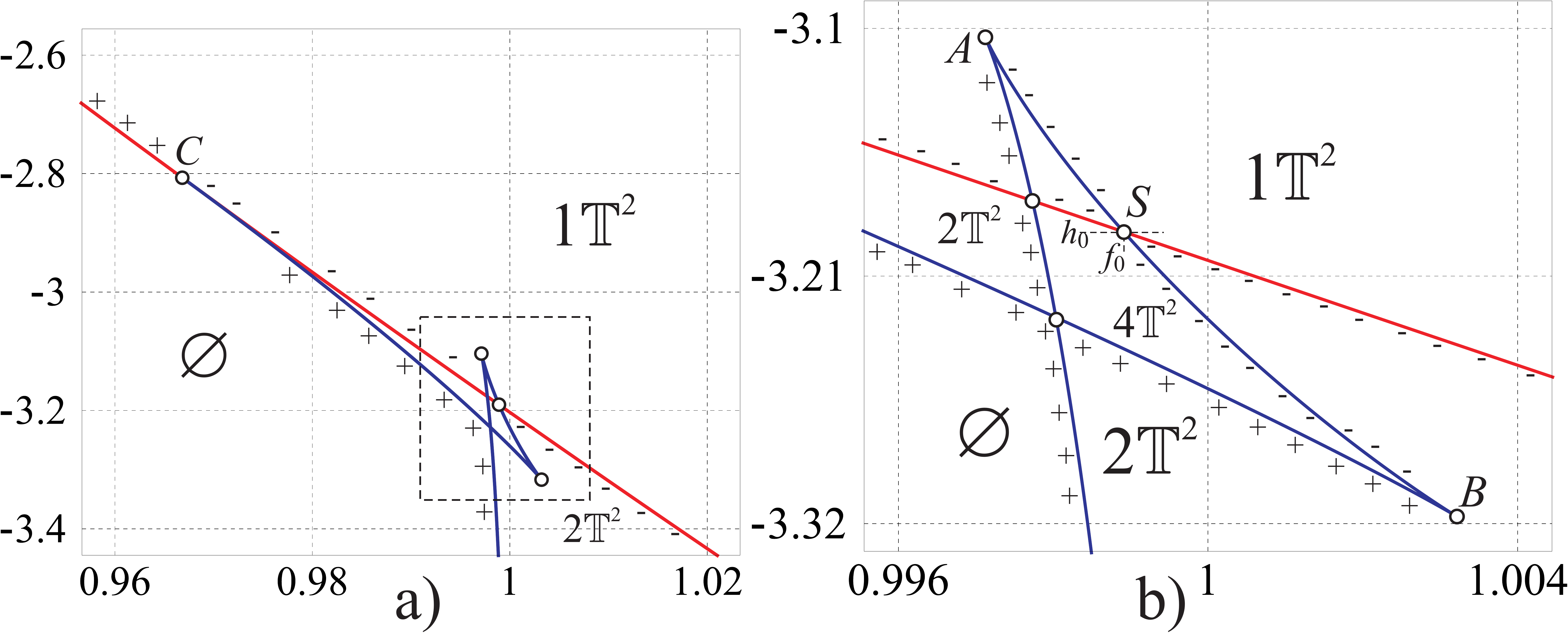}
\caption{Enlarged fragment of the bifurcation diagram with $\Gamma_1=\Gamma_2=1$ $c=12;\varepsilon=28$.}
\label{fig3}
\end{figure}

Again, the signs  $``+``$ and $``-``$ correspond to elliptic (stable) and hyperbolic periodic solutions in phase space. The type change occurs at the cusp points $A$ and $B$, as well as at the point of tangency $C$, all shown on the bifurcation diagram $\Sigma$.
For clarity, an explicit expression of the coefficient $C_F$ is shown. It is responsible for the type
(elliptic/hyperbolic) of the smooth branch of the curve $\gamma_1$:

\begin{equation*}\label{y_6}
\begin{array}{l}
\gamma_1: C_F=(4-c+3\varepsilon)f^3+2(c+4-7\varepsilon)f^2+4(c+5\varepsilon)f-8(c+\varepsilon),
\quad 0<f<2.
\end{array}
\end{equation*}
At $C_F < 0$ we get the point of a type ``center'', and at $C_F > 0$ we get the point of type ``saddle''.

\subsection{General case of bifurcation diagram's parametrization}

Here we give a fragment of the bifurcation diagram $\Sigma$ (Fig.~4) and its dynamics (Fig.~5) for the general case, using implicit parametrization in the form of the equation \eqref{eq3_2} and dependencies \eqref{eq3_3}. The program of an interactive bifurcation diagram visualization was written in the \textit{Python} programming language, using interactive environment of \textit{Jupyter Notebook}. Having specified values of the parameters $\Gamma_1=1,\Gamma_2=1,0015$ $c=12,\varepsilon=28$, numerical methods were implemented in order to solve the polynomial equations \eqref{eq3_2} for $r_1$, given $r_2\in(0;1)$. Thus on a plane of coordinates $(f,h)$ the bifurcation diagram $\Sigma$ was plotted in the form of dependencies \eqref{eq3_3}. When the physical parameter of the intensity ratio is perturbed (for clarity, $a=\frac{\Gamma_2}{\Gamma_1}=1,0015$), the ``separation'' of the point $C$ is observed (on Fig.~3(a)  it corresponds to the point of tangency $C$). This leads to the perturbation of a part of the bifurcation diagram. This situation is typical for the perturbation of bifurcation diagrams of integrable systems that have points of tangency between bifurcation sheets. For example, such pattern holds for bifurcation diagrams of Kovalevskaya's top integrable cases and its generalization to Kovalevskaya-Yehia gyrostat in a rigid body dynamics \cite{KhRyab2017JMS}.

\begin{figure}[!ht]
\centering
\includegraphics[width=1\textwidth]{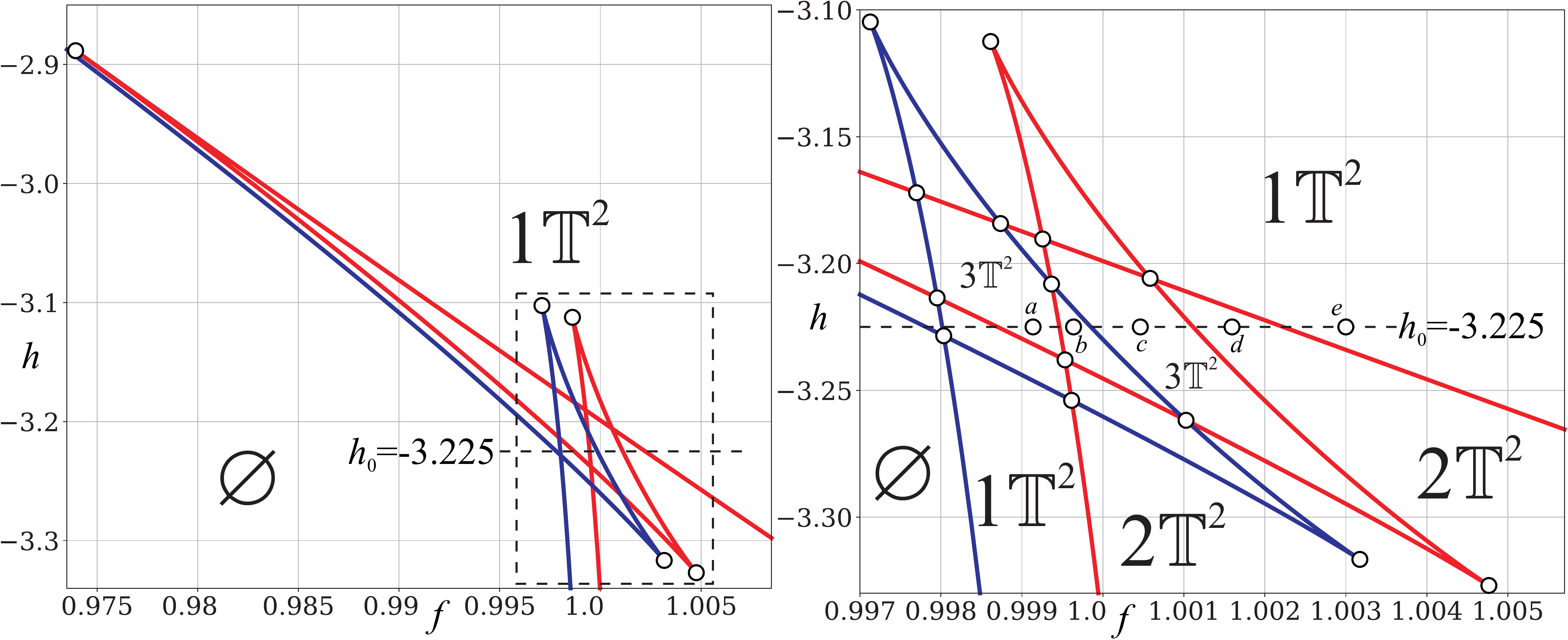}
\caption{Enlarged fragment of the bifurcation diagram perturbation for $\Gamma_1=1;\Gamma_2=1,0015$ $c=12;\varepsilon=28$.}
\label{fig4}
\end{figure}

Fig.~5 shows fragments of the bifurcation diagram's change in dynamics using interactive visualization program described above. Along with the change of parameters, such as the ratio of intensities $a=\frac{\Gamma_2}{\Gamma_1}$, the interaction parameter of vortices $c$ and the deformation parameter $\varepsilon$, the formation of triangular regions, bounded by pieces of bifurcation curves, is observed. These regions experience various deformations (e.g., some triangular regions disappear) and finally we can observe a stable bifurcation diagram which corresponds to the problem of dynamics of two vortices bounded by a circular region in an ideal fluid \cite{BorMamSokolovskii2003}.

\begin{figure}[!ht]
\centering
\includegraphics[width=1\textwidth]{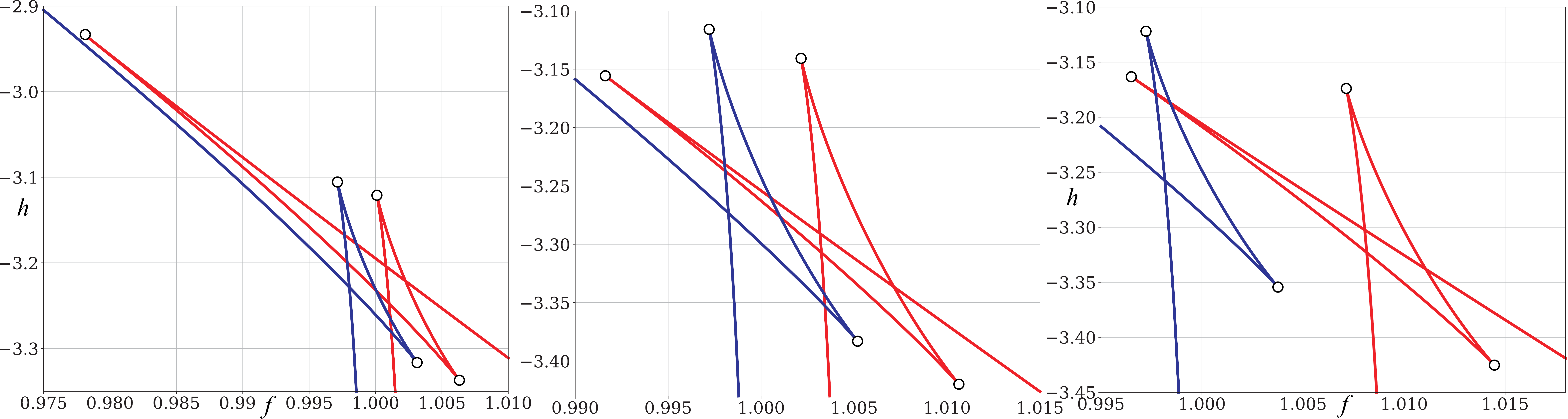}\newline
\includegraphics[width=1\textwidth]{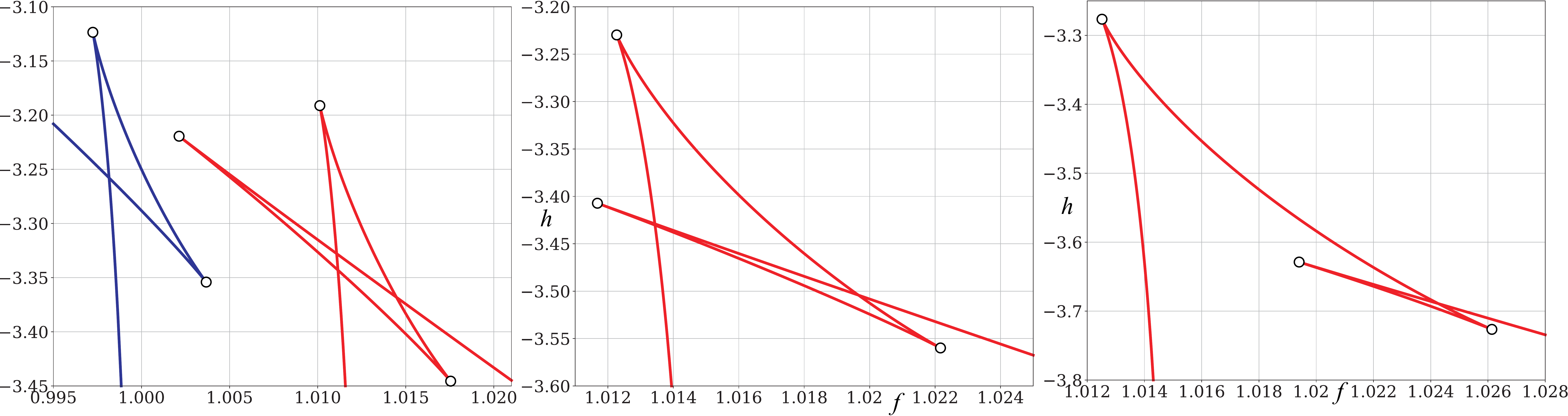}\newline
\includegraphics[width=1\textwidth]{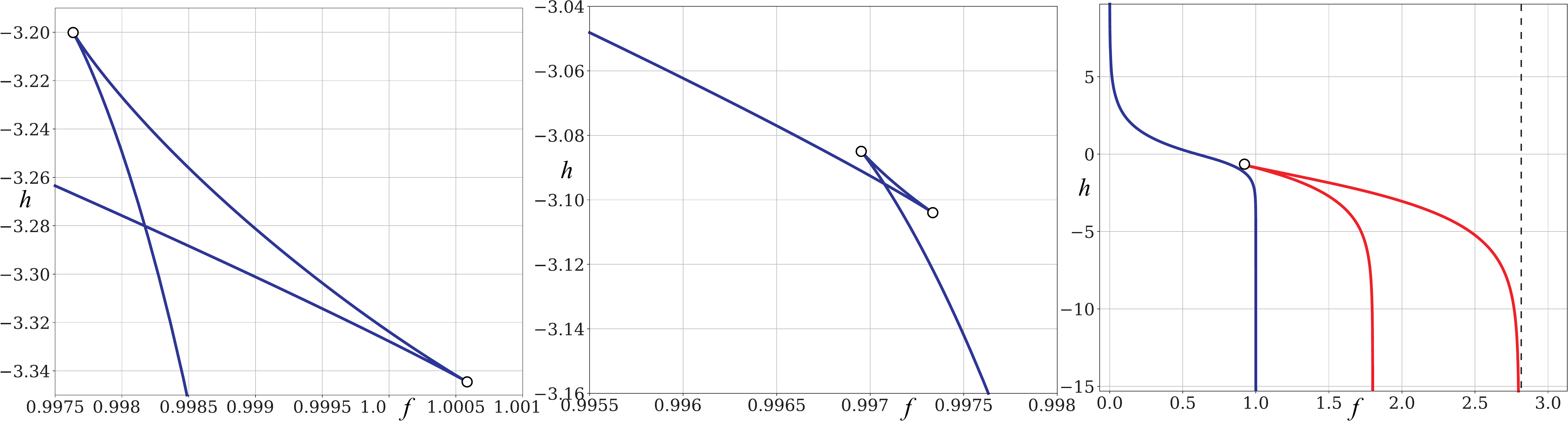}
\caption{Dynamics of the bifurcation diagram $\Sigma$ in the general case.}
\label{fig5}
\end{figure}

\section{REDUCTION TO A SYSTEM WITH ONE DEGREE OF FREEDOM}
Here we restrict intensities' values by positive ones, i.e. throughout this section we assume that the intensity parameters $\Gamma_1$ and $\Gamma_2$ have a positive sign. Let's perform an explicit reduction to a system with one degree of freedom. In order to perform this for the system \eqref{eq1_1} with Hamiltonian \eqref{eq1_2}, one should substitute phase variables $(x_k, y_k)$ to new variables $(u,v,\alpha)$ using the formulas below
\begin{equation*}\label{z1}
\begin{array}{l}
\displaystyle{x_1=\frac{1}{\sqrt{\Gamma_1}}[u\cos(\alpha)-v\sin(\alpha)],\quad y_1=\frac{1}{\sqrt{\Gamma_1}}[u\sin(\alpha)+v\cos(\alpha)],}\\[3mm]
\displaystyle{x_2=\frac{1}{\sqrt{\Gamma_2}}\sqrt{f-u^2-v^2}\cos(\alpha),\quad y_2=\frac{1}{\sqrt{\Gamma_2}}\sqrt{f-u^2-v^2}\sin(\alpha).}
\end{array}
\end{equation*}

The physical variables $(u,v)$ are Cartesian coordinates of one of the vortices in a coordinate system that is associated with another vortex rotating around the center of vorticity. The choice of such variables is suggested by the presence of the integral of the angular momentum of vorticity \eqref{eq1_4}, which is invariant under the rotation group $SO(2)$. The existence of a one-parameter symmetry group allows to perform a reduction to a system with one degree of freedom in a similar fashion as in mechanical systems with symmetry \cite{Kharlamov1988}. Backward substitution

\begin{equation*}
U=\sqrt{\Gamma_1}\frac{x_1x_2+y_1y_2}{\sqrt{x_2^2+y_2^2}},\quad V=\sqrt{\Gamma_1}\frac{y_1x_2-x_1y_2}{\sqrt{x_2^2+y_2^2}},
\end{equation*}
leads to canonical variables with respect to the bracket \eqref{eq1_3}:
\begin{equation*}
\{U,V\}=-\{V,U\}=1,\quad \{U,U\}=\{V,V\}=0.
\end{equation*}

The system with respect to the variables $(u,v)$ is Hamiltonian:
\begin{equation}\label{z2}
\dot u=\frac{\partial H_1}{\partial v},\quad
\dot v=-\frac{\partial H_1}{\partial u},
\end{equation}
with Hamiltonian
\begin{equation}
\label{z3}
\begin{array}{l}
\displaystyle{H_1=
\frac{1}{2}\Bigl\{\Gamma_1^2\ln\Bigl(1-\frac{u^2+v^2}{\Gamma_1}\Bigr)+\Gamma_2^2\ln\Bigl(1-\frac{f}{\Gamma_2}+\frac{u^2+v^2}{\Gamma_2}\Bigr)-}\\[3mm]
\displaystyle{-\Gamma_1\Gamma_2(c+\varepsilon)\ln\Bigl[\Bigl(\frac{u}{\sqrt{\Gamma_1}}-\frac{\sqrt{f-u^2-v^2}}{\sqrt{\Gamma_2}}\Bigr)^2+\frac{v^2}{\Gamma_1}\Bigr]-}\\[3mm]
\displaystyle{+\varepsilon\Gamma_1\Gamma_2\ln\Bigl[\Bigl(1-\frac{u\sqrt{f-u^2-v^2}}{\sqrt{\Gamma_1\Gamma_2}}\Bigr)^2+\frac{v^2(f-u^2-v^2)}{\Gamma_1\Gamma_2}\Bigr].\Bigr\}}

\end{array}
\end{equation}

The rotation angle $\alpha(t)$ of the rotating coordinate system satisfies the differential equation
\begin{equation*}
\begin{array}{l}
\displaystyle{\dot\alpha=
\frac{\Gamma_2^2}{\Gamma_2-f+u^2+v^2}+c\Gamma_2\Gamma_1\sqrt{\Gamma_1}\frac{R_1(u,v)}{Q_1(u,v)}+\varepsilon\Gamma_1\Gamma_2\sqrt{\Gamma_2}
\frac{(\Gamma_1-u^2-v^2)}{\sqrt{f-u^2-v^2}}\frac{R_2(u,v)}{Q_2(u,v)}},
\end{array}
\end{equation*}
where
\begin{equation*}
\begin{array}{l}
R_1(u,v)=\Gamma_1(f-u^2-v^2)-u^2\Gamma_2,\\[3mm]
Q_1(u,v)=\sqrt{\Gamma_2}u\sqrt{f-u^2-v^2}\bigl[\Gamma_2(u^2+v^2)-\Gamma_1(f-u^2-v^2)\bigr]+\\[3mm]
+\sqrt{\Gamma_1}(f-u^2-v^2)\bigl[\Gamma_1(f-u^2-v^2)-\Gamma_2(u^2-v^2)\bigr]
\end{array}
\end{equation*}
\begin{equation*}
\begin{array}{l}
R_2(u,v)=\sqrt{\Gamma_2}\sqrt{f-u^2-v^2}(\Gamma_1+u^2+v^2)-\sqrt{\Gamma_1}u(\Gamma_2+f-u^2-v^2),\\[3mm]
Q_2(u,v)=\Gamma_2\bigl[\Gamma_1(\Gamma_1+4u^2)+(u^2+v^2)^2\bigr](f-u^2-v^2)+\Gamma_1(u^2+v^2)\bigl[\Gamma_2^2+(f-u^2-v^2)^2\bigr]-\\[3mm]
-2u\sqrt{\Gamma_1\Gamma_2}\sqrt{f-u^2-v^2}(\Gamma_1+u^2+v^2)(\Gamma_2+f-u^2-v^2).
\end{array}
\end{equation*}

The fixed points of the reduced system \eqref{z2} are determined by the critical points of the reduced Hamiltonian \eqref{z3} and correspond to the relative equilibria of vortices in the system \eqref{eq1_1}. For a fixed value of an integral of the moment of vorticity $f$, the regular levels of the reduced Hamiltonian are compact and motions occur along closed curves. It can be shown that the critical values of the reduced Hamiltonian determine the bifurcation diagram \eqref{eq3_2}, \eqref{eq3_3}.

Note some interesting special cases.
For a segment of the bifurcation curve $(AB)$ (Fig.~\ref{fig1}), the motion on the plane $(u,v)$ occurs along a curve that is topologically structured as $\mathbb S^1\,\dot{\cup}\,\mathbb S^1\,\dot{\cup}\,\mathbb S^1$ (Fig.~6b)), and the integral critical surface is a trivial bundle over $\mathbb S^1$ with the layer $\mathbb S^1\,\dot{\cup}\,\mathbb S^1\,\dot{\cup}\,\mathbb S^1$.

When passing through a section of the curve $(AB)$ in the case of $c>3$ (Fig.~\ref{fig1}), the bifurcation of three tori into one occurs as follows $3\mathbb T^2 \to \mathbb S^1\times\left(\mathbb S^1\,\dot{\cup}\,\mathbb S^1\,\dot{\cup}\,\mathbb S^1\right)\to \mathbb T^2$. With the help of the reduced Hamiltonian level curves in Fig.~\ref{fig6}, this bifurcation is clearly visible ($h_1=-4.5$ for a) $f_1=1.04$; b) $f_{2}=1.042957$; c) $f_3=1.05$).

For another special case of the bifurcation diagram (Fig.~3 b), where the bifurcation curves $\gamma_1$ and $\gamma_2$ intersect at the point $S$ ($x_S=1,008383; f_0=0,9989101; h_0=-3,1903429$), the movement occures on the plane $(u,v)$ and goes by the curve which is topologically arranged as $\mathbb S^1\,\dot{\cup}\,\mathbb S^1\,\dot{\cup}\,\mathbb S^1\,\dot{\cup}\,\mathbb S^1$ (Fig.~7~a). In this case the integral critical surface is a trivial bundle over $\mathbb S^1$ with a layer of $\mathbb S^1\,\dot{\cup}\,\mathbb S^1\,\dot{\cup}\,\mathbb S^1\,$ $\dot{\cup}\,\mathbb S^1$.
During cross-over of the point $S$, moving along the line $h=h_0$ on the bifurcation diagram $\Sigma$, the bifurcation of four tori into one occures $4\mathbb T^2 \to \mathbb S^1\times\left(\mathbb S^1\,\dot{\cup}\,\mathbb S^1\,\dot{\cup}\,\mathbb S^1\,\dot{\cup}\,\mathbb S^1\right)$ $\to \mathbb T^2$.
With the help of the reduced Hamiltonian \eqref{z3} level curves, this four-into-one bifurcation is clearly seen on Fig.~7.
The following parameters were used here: $\Gamma_1=\Gamma_2=1,\; c=12,\; \varepsilon=28$.

\begin{figure}[!ht]
\centering
\includegraphics[width=1\textwidth]{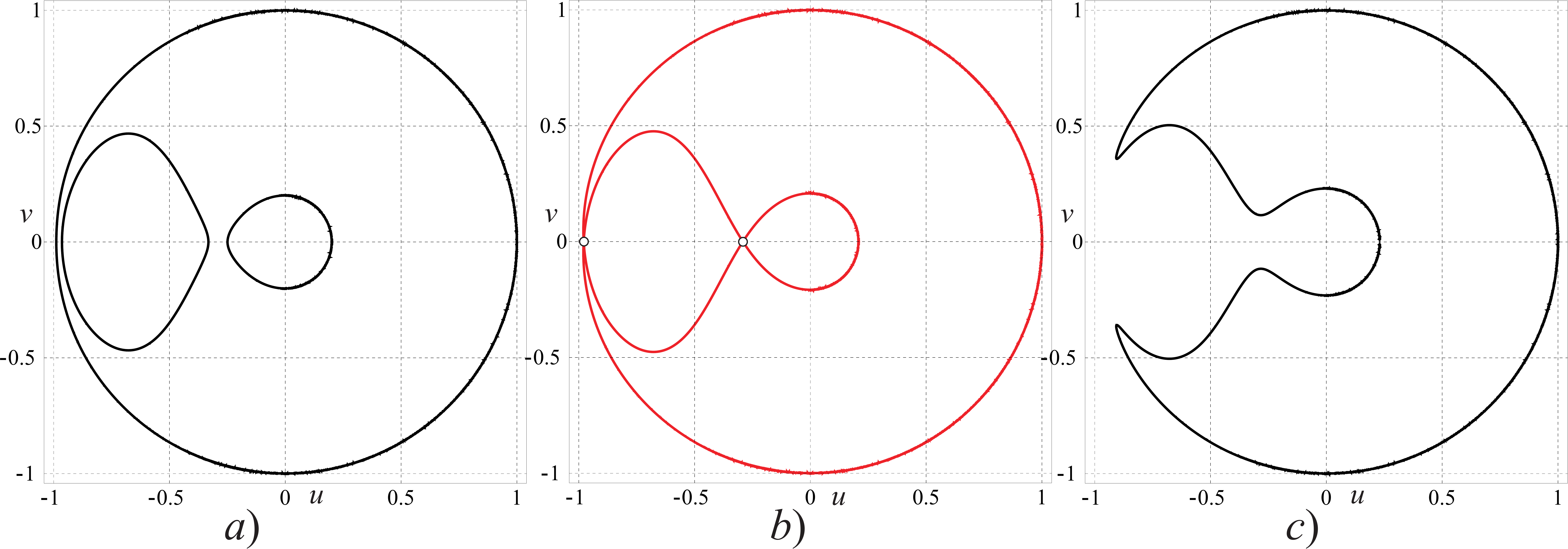}
\caption{Level curves of reduced Hamiltonian $H_1$ for $\Gamma_1=\Gamma_2=1$ and $c>3, \varepsilon=0$.}
\label{fig6}
\end{figure}

\begin{figure}[!ht]
\centering
\includegraphics[width=1\textwidth]{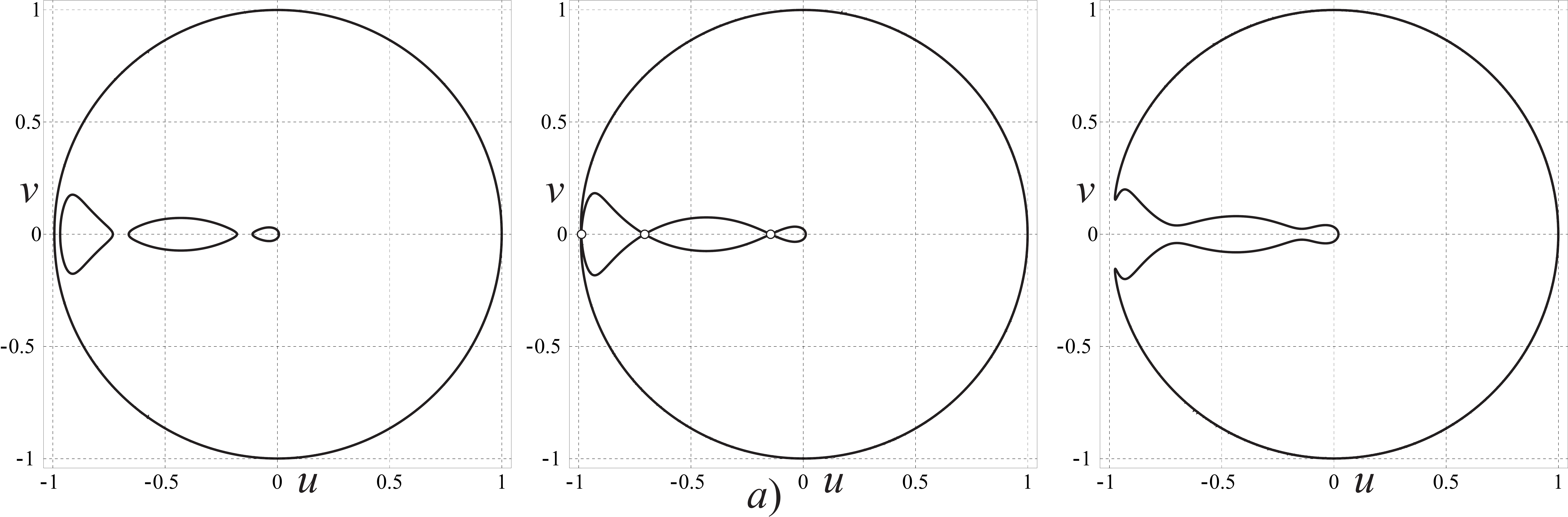}
\caption{Level curves of reduced Hamiltonian $H_1$ along the line $h=h_0$ for $\Gamma_1=\Gamma_2=1$ and $c=12, \varepsilon=28$.}
\label{fig7}
\end{figure}

Let's consider an integrable perturbation of the physical parameter of the intesities' ratio $a=\frac{\Gamma_2}{\Gamma_1}$ when the deformation parameter $\varepsilon$ is absent, i.e. equal to zero. In this case, a perturbation of a special layer of Liouville foliation is observed, i.e. in the terminology of singularities, the atom $D_1$ desintegrates into two atoms of type $B$ \cite{oshtuzh2018}. Fig.~8 corresponds to the section of the bifurcation diagram on Fig.~1 b) along the line $h=h_0=-5.1$. Here, for specified values of second integral's parameter $f=a\text{\,--\,}e$, the perturbation is clearly presented (Fig.~8 (b) (d) of the said feature (see~Fig.~6 (b))).
This result is also confirmed by \cite{oshtuzh2018}.

\begin{figure}[!ht]
\centering
\includegraphics[width=1\textwidth]{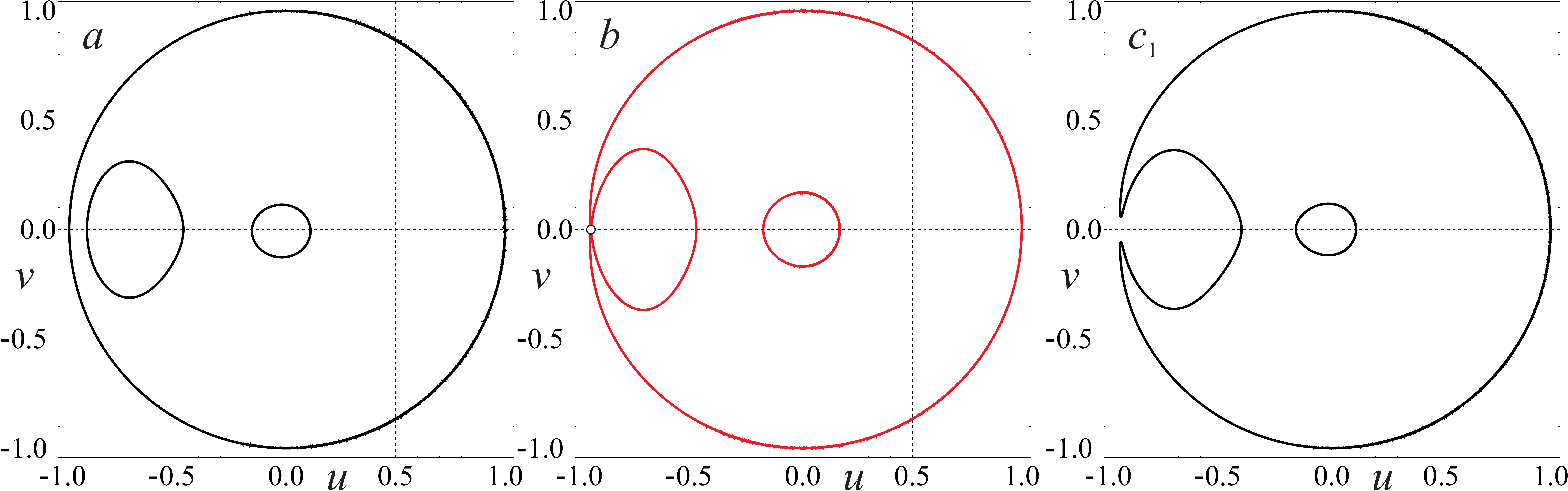}\newline
\includegraphics[width=1\textwidth]{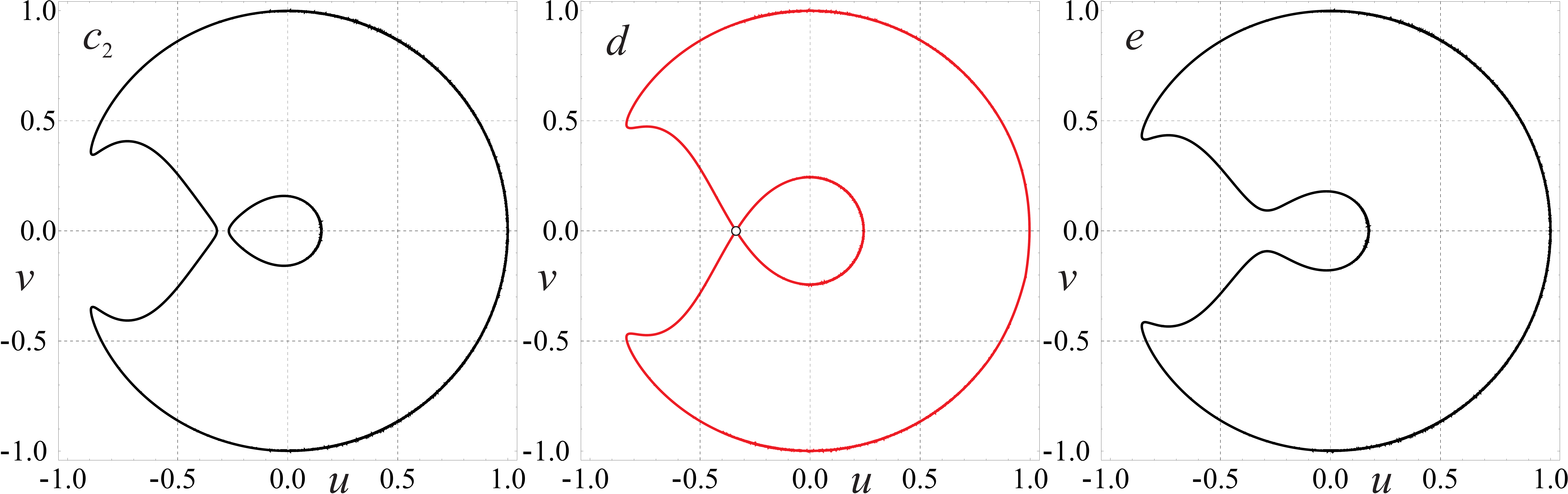}
\caption{Level curves of reduced Hamiltonian $H_1$ for $\Gamma_1=1;\Gamma_2=1,05$ and $c>3$.}
\label{fig8}
\end{figure}

Finally, for an interactive fragment of the bifurcation diagram perturbation (Fig.~4) for $\Gamma_1=1;\Gamma_2=1,0015$ $c=12;\varepsilon=28$ (here the parameters are arbitrary), the corresponding contour lines of the reduced Hamiltonian $H_1$ for the selected values of the second integral $f=a\text{\,--\,}e$ are shown on Fig.~9.

\begin{figure}[!ht]
\centering
\includegraphics[width=1\textwidth]{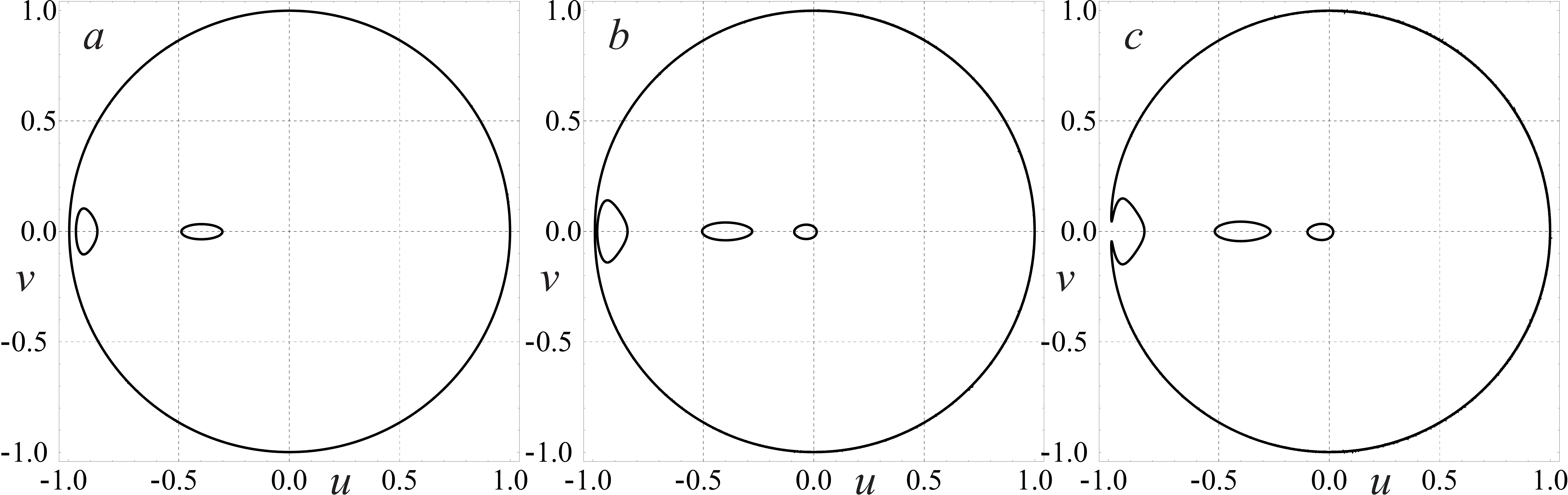}\newline
\includegraphics[width=1\textwidth]{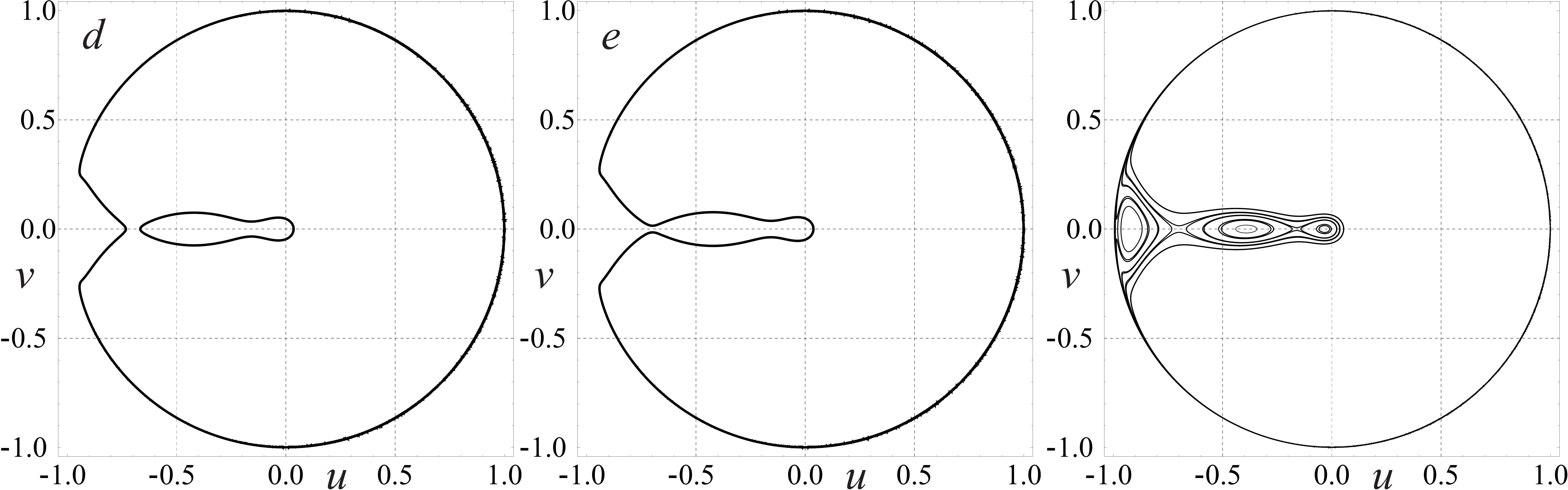}
\caption{Level curves of reduced Hamiltonian $H_1$ for $\Gamma_1=1; \Gamma_2=1.0015$ and $c=12,\varepsilon=28$ along the line $h=h_0-3.225$ at $f=a\text{\,--\,}e$.}
\label{fig9}
\end{figure}

In conclusion it should be mentioned, that implemented for \textit{arbitrary} intensities $\Gamma_1, \Gamma_2$, the physical parameter $c$ and the deformation parameter $\varepsilon$, computer model of absolute dynamics of the vortices in the fixed coordinate system described by \eqref{eq1_1} and \eqref{eq1_2} is based on the analytical results of this publication (an implicit parametrization of the bifurcation diagrams \eqref{eq3_2}, \eqref{eq3_3}, the reduction to the system with one degree of freedom \eqref{z2}, the stability analysis).
A classification of bifurcation diagrams (study of points of tangency bifurcations, cusps points, depending on the values of physical parameters $a, c, \varepsilon$) might be a separate research topic with the aim of creating a topological atlas for the generalized integrable model of vortex dynamics considered here.

\section*{ACKNOWLEDGMENTS}
The authors thank to Prof. A.\,V.\,Borisov for useful discussions.

\section*{FUNDING}
The work of P.\,E.\,Ryabov was supported by RFBR grant
17-01-00846 and was carried out within the framework of the state assignment of the
Ministry of Education and Science of Russia (project no.\, 1.2404.2017/4.6).

\section*{CONFLICT OF INTEREST}

The authors declare that they have no conflicts of interest.

\section*{AUTHORS’ CONTRIBUTIONS}
Interactive visualization of the bifurcation diagram is made by A.\,A.\,Shadrin based on the equations of a bifurcation diagram \eqref{eq3_2}, \eqref{eq3_3} and reduction to the system with one degree of freedom in general case \eqref{z2}.
Authors P.\,E.\,Ryabov and A.\,A.\,Shadrin were involved in writing the text of the paper. All authors participated in discussing the results.


\begin{thebibliography}{99}
\bibitem{fett2009} Fetter, A.\,L.,  Rotating trapped Bose-Einstein condensates, \textit{Rev.~Mod.~Phys.}, 2009, vol.\,81, no.\,2,  pp.\,647--691.


\bibitem{kevrekPhysLett2011} Torres, P.\,J., Kevre\-kidis, P.\,G., Frantzeska\-kis, D.\,J., Carre\-te\-ro-Gon\-zalez, \,R., Schmel\-cher, P. and Hall, D.\,S., Dynamics of vortex dipoles in confined Bose--Ein\-stein con\-den\-sates, \textit{Phys.~Lett.~A.},  2011, vol.\,375, pp.\,3044--3050.

\bibitem{kevrek2013} Navarro, R., Carretero-Gonz\'{a}lez, R., Torres, P.\,J.,  Kevrekidis, P.\,G.,
Frantzeskakis, D.\,J.,  Ray, M.\,W.,  Altunta\c{s}, E. and Hall, D.\,S., Dynamics of Few Co-rotating Vortices in Bose-Einstein Condensates, \textit{Phys. Rev. Lett.}, 2013, vol.\,110, no.\,22, pp.~225301-1--6.


\bibitem{kevrikidis2014} Koukouloyannis, \,V. and Voyatzis, \,G. and Kevrekidis, \,P.\,G., Dynamics of three noncorotating vortices in Bose--Einstein condensates, \textit{Phys. Rev. E.}, 2014, vol.\,89, no.\,4, pp.\,042905-1--14.

\bibitem{Greenhill1877} Greenhill, A.\,G., Plane vortex motion~// \textit{Quart. J. Pure Appl. Math.}, 1877/78, vol.~15, no.~58, pp.~10--27

\bibitem{BorMamSokolovskii2003} Kilin, A.\,A., Borisov, A.\,V. and Mamaev, I.\,S., The Dynamics of Point Vortices Inside and Outside
a Circular Domain, in \textit{Basic and Applied Problems of the Theory of Vortices}  Borisov,  A.\,V.  and Mamaev,  I.\,S. and Sokolovskiy, M.\,A. (Eds.), Izhevsk: Regular and Chaotic Dynamics, Institute of Computer Science, 2003, pp. 414-440
(Russian).

\bibitem{BorMam2005book} Kilin, A.\,A., Borisov, A.\,V., Mamaev~I.\,S.,
 The Dynamics of Point Vortices Inside and Outside a Circular Domain, in \textit{
 Mathematical methods of vortex structure dynamics} Borisov, A.\,V. and Mamaev, I.\,S.(Eds.)  M.-Izhevsk:
Regular and Chaotic Dynamics,, Institute of Computer Science, 2005, pp.~148--173 (Russian).

\bibitem{borkil2000} Borisov, А.\,V. and Kilin, A.\,A., Stability of Thomson's Configurations of Vortices on a Sphere, \textit{Regular and Chaotic Dynamics}, 2000, vol.\,5, no.\,2, pp.\,189--200.

\bibitem{bormamkil2004} Borisov, A.\,V.,  Mamaev, I.\,S. and Kilin, A.\,A., Absolute and relative choreographies in the problem of point vortices moving on a plane, \textit{Regular and Chaotic Dynamics}, 2004, vol.\,9, no.\,2, pp.\,101--111.

\bibitem{kilinbormam2013} Borisov, A.\,V.,  Kilin, A.\,A. and Mamaev, I.\,S., The Dyna\-mics of Vortex Rings: Leapf\-rog\-ging, Cho\-reographies and the Stability Problem, \textit{Regular and Chaotic Dynamics}, 2013, vol.\,18, nod.\,1--2, pp.\,33--62.

\bibitem{BorSokRyab2016} Borisov, A.\,V.,  Ryabov, P.\,E.  and Sokolov, S.\,V.,	
 Bifurcation analysis of the motion of a cylinder and a point vortex in an ideal fluid, \textit{Mathematical Notes}, 2016, vol.\,99, no.\,6,  pp.\,834--839.	

\bibitem{SokRyabRCD2017}
 Sokolov, S.\,V. and Ryabov, P.\,E., Bifurcation Analysis of the Dynamics of Two Vortices in a Bose–Einstein Condensate. The Case of Intensities of Opposite Signs, \textit{Regular and Chaotic Dynamics}, 2017, vol.\,22, no.\,8, pp.\,979--998.

\bibitem{sokryab2018} Sokolov, S.\,V. and Ryabov, P.\,E. Bifurcation Diagram of the Two Vortices in a Bose-Einstein Condensate with Intensities of the Same Signs, \textit{Doklady Mathematics}, 2018, vol.\,97, no.\,3, pp.\,1--5.

\bibitem{RyabDan2019} Ryabov, P.\,E. Bifurcations of Liouville Tori in a System of Two Vortices
of Positive Intensity in a Bose-Einstein Condensate,  \textit{Doklady Mathematics}, 2019, vol.\,99, no.\,2, pp.\,1--5.


\bibitem{RyabSocND2019} Ryabov, P.\,E., Sokolov, S.\,V. Phase Topology of Two Vortices of Identical Intensities in a Bose-Einstein Condensate, \textit{Rus. J. Nonlin. Dyn.}, 2019, vol.\,15, no.\,1, pp.\,59--66.

\bibitem{Kharlamov1988} Kharlamov, M.\,P., \textit{Topological Analysis of Integrable Problems of Rigid Body Dynamics}, Leningrad:
Leningr. Gos. Univ., 1988 (Russian).

\bibitem{bolsmatvfom1990} Bolsinov, A.\,V., Matveev, S.\,V. and Fomenko, A.\,T., Topological classification of integrable Hamiltonian systems with two degrees of freedom. List of systems of small complexity, \textit{Russian Mathematical Surveys}, 1990, vol.\,45, no.\,2, pp.\,59--94.

\bibitem{oshtuzh2018} Oshemkov,  A.\,A. and Tuzhilin, M.\,A., Integrable perturbations of saddle singularities of rank 0 of integrable Hamiltonian systems, \textit{Sbornik: Mathematics}, 2018, vol.\,209, no.\,9, pp.\,1351--1375.	


\bibitem{RyabovArXiv2019}
Ryabov\,P.\,E. On bifurcation of the four Liouville tori in one generalized integrable model of the vortex dynamics,
\texttt{https://arxiv.org/abs/1903.09945}\,(Russian).

\bibitem{BolBorMam1} Bolsinov, A.\,V.,  Borisov, A.\,V.  and Mamaev, I.\,S., Topology and Stability of Integrable Systems, \textit{Russian
Math. Surveys}, 2010, vol.\,65, no.\,2, pp.\,259--318; see also: \textit{Uspekhi Mat. Nauk}, 2010, vol.\,65, no.\,2, pp.\,71--132.

\bibitem{KhRyab2017JMS} Kharlamov, M.\,P., Ryabov, P.\, E., Kharlamova, I.\,I., Topological Atlas of the Kovalevskaya-Yehia Gyrostat, \textit{Journal of Mathematical Sciences (United States)}, 2017,  vol.\,227, no.\,3, pp.\,241--386.
\end{thebibliography}
\end{document}